\documentclass[conference]{IEEEtran}
\IEEEoverridecommandlockouts
\usepackage{cite}
\usepackage{amsmath,amssymb,amsfonts}
\usepackage{algorithmic}
\usepackage{multirow}
\usepackage{graphicx}
\usepackage{textcomp}
\usepackage{xcolor}
\usepackage{color}
\usepackage{xspace}
\usepackage{url}
\def\BibTeX{{\rm B\kern-.05em{\sc i\kern-.025em b}\kern-.08em
   T\kern-.1667em\lower.7ex\hbox{E}\kern-.125emX}}
\usepackage{enumitem}
\usepackage{balance}
\usepackage{siunitx} 


\usepackage{listings}
\usepackage{epsfig}
\usepackage{program}
\usepackage[vlined,linesnumbered]{algorithm2e}
\usepackage{soul}
\newtheorem{thm}{Theorem}

\begin{document}
\lstset{language=Java}

\title{{A Data Flow Analysis Framework for Data Flow Subsumption}} 
\author{\IEEEauthorblockN{Marcos Lordello Chaim}
\IEEEauthorblockA{University of Sao Paulo\\
Sao Paulo, SP, Brazil\\
Email:chaim@usp.br}

\and
\IEEEauthorblockN{Kesina Baral and Jeff Offutt}
\IEEEauthorblockA{George Mason University\\
Fairfax, VA, USA\\
Email:\{kbaral4,offutt\}@gmu.edu}
} 

\maketitle
\thispagestyle{plain}
\pagestyle{plain}

\begin{abstract}
Data flow testing creates test requirements as definition-use (DU) associations,
where a \textit{definition} is a program location that assigns a value to a variable
and a \textit{use} is a location where that value is accessed.
Data flow testing is expensive, largely because of the number of test requirements.
Luckily, many DU-associations are redundant in the sense that if one 
test requirement (e.g., node, edge, DU-association) is covered,
other DU-associations are guaranteed to also be covered.
This relationship is called \textit{subsumption}.
Thus, testers can save resources by only
covering DU-associations that are not subsumed by other testing requirements.
In this work, we formally describe the \textit{Data Flow Subsumption Framework} (DSF) conceived to tackle the data flow subsumption problem. We show that DFS is a distributive data flow analysis framework which allows efficient iterative algorithms to find the Meet-Over-All-Paths (MOP) solution for DSF transfer functions. The MOP solution implies that the results at a point $p$ are valid for all paths that reach $p$.
We also present an algorithm, called Subsumption Algorithm (SA), that uses DSF transfer functions and iterative algorithms to find the \textit{local} DU-associations-node subsumption; that is, the set of DU-associations that are covered whenever a node $n$
is toured by a test. A proof of SA's correctness is presented and its complexity is analyzed. 

\end{abstract}




\begin{IEEEkeywords}
Software testing, Structural testing, Data flow testing, Subsumption relationship, {Data flow analysis frameworks}, Algorithms
\end{IEEEkeywords}


\section{Introduction}
\label{sec:intro}
Data flow testing (DFT) attempts to enable
comprehensive structural testing based on flows of data through software
\cite{las83:ing,nta84:ing,rap85:ing,ura88:ing}.
Roughly speaking, it involves developing tests that
exercise (cover) every value assigned to a
variable and its subsequent references (uses).
These pairs of definitions and uses are called
\textit{definition-use associations} (DUA)
and the paths from defs to uses are called
\textit{DU-paths}
\cite{rap85:ing}.
DFT {uses} both control- and data flow information to design tests.
As a result, data flow tests can exercise more situations than control-flow testing can.
The intuition is that causing more values to reach different uses
can find more problems in the software and increase confidence in its reliability.

Studies have shown that DFT can effectively detect faults in programs \cite{hut94:ing,fra98:ing} and can verify the security of web applications \cite{dao11:ing}.
Hemmati \cite{hem15:ing} conducted a study in which control-flow criteria, namely, statement, branch, loop, and MCDC coverage, were compared against definition-use pair coverage with respect to their
ability to detect faults.
They found that out of 274 faults in sizable open-source programs,
only 76 (28\%) were found by control-flow coverage criteria.
For those same faults,
definition-use pair coverage detected 79\% of the faults not detected by control-flow criteria.
Thus, DFT can help
achieve and
verify software quality,
which is especially important for mission-critical systems.

To achieve high DFT coverage, a tester needs to develop specific test cases to cover a large number of DUAs.
Furthermore, some DUAs cannot be covered by any test cases because the underlying path is infeasible.
Both tasks require human intervention,
which increases the cost of DFT\@.

Many approaches to reduce DFT's cost have been created.
Some exploit the subsumption relationship among DUAs \cite{mar03:ing,ji18}.
A \textit{test requirement} (TR) $tr_1$ (e.g., a DUA $D_1$) \textit{subsumes} another test requirement $tr_2$ (another DUA $D_2$) if every complete path that traverses $tr_1$ also traverses $tr_2$.
The minimal subset of TRs that subsumes every other TR is called a \textit{spanning set} and
its elements are referred to as \textit{unconstrained} test requirements \cite{mar03:ing}.

If a spanning set of DUAs could be identified,
testers would only need to satisfy the unconstrained DUAs,
saving time and effort.
Jiang et al. \cite{ji18} showed that
targeting unconstrained DUAs
reduces the cost of input data generation.
Additionally,  DUAs can be subsumed by 
test requirement (e.g., node, edge) of different criteria so that once they are covered some DUAs are guaranteed to also be covered \cite{san07:ing}.


We present a novel and efficient approach to tackle data flow subsumptions. 
It models the problem of finding the \textit{local DUA-node} subsumption; that is, those DUAs that are covered whenever a particular point $p$ (e.g, a node) of a program is reached, as a data flow analysis framework \cite{hec77:ing,kam77:ing}.
Using the local DUA-node subsumption, one can efficiently discover the subsumption of DUAs with respect to nodes, edges, 
and other DUAs \cite{cha20a}. 

The goal of this work is to describe formally the Data Flow Subsumption Framework (DSF), showing that DFS is  distributive and that it can be used to  solve the local DUA-node subsumption. 
A distributive framework allows iterative algorithms to find the Meet-Over-All-Paths (MOP) solution, which means that the results are valid for all paths that reach a program point $p$. We also present an algorithm, called Subsumption Algorithm (SA), that uses DSF transfer functions and iterative algorithms to find the \textit{local} DUA-node subsumption. A proof of SA's correctness is presented and its complexity is analyzed.

This document starts with background in data flow testing in section \ref{sec:back}.
We then describe  the local DUA-node subsumption problem in section~\ref{sec:localduanode}. Section~\ref{sec:dsf} describes the Data Flow Subsumption Framework (DSF),
followed by the algorithm to solve  the local DUA-node subsumption in section \ref{sec:sa}.
We draw the conclusions  in section \ref{sec:concl}.

\section{Background}
\label{sec:back}

{Graph based testing
{criteria use graph abstractions}
of the software under test to generate tests.
A graph can be defined as $G(N, E, s, e)$, where $N$ is a set of nodes, $E$ is a set of edges, $s$ is the start node and $e$ is the exit node.
A node $(n)$ can represent a single statement of the program or a sequence of statements.
For our purposes, we consider a sequence of statements, also known as a \textit{basic block}, as a node.
{An edge represents potential control flow from one node to another,
written as ($n_{i}$, $n_{j}$),  {$n_i \neq n_j$},
where node $n_{i}$ is the \textit{predecessor}
and node $n_{j}$ is the \textit{successor}.}
Graphs extracted from a program must have at least one start node and exit node for it to be useful to generate tests.
A program can have multiple entry and exit points.

A \textit{path} is a sequence of nodes
($n_i$, $\ldots$, $n_k$, $n_{k+1}$, $\ldots$, $n_j$),
where $i \leq k < j$,
such that ($n_k, n_{k+1}$) $\in E$\@.
A \textit{test path} is a
{special}
path that starts from a start node $s$ and ends at an exit node $e$.
A test path represents the execution of
{one or more}
test cases.
{A \textit{side-trip} is a sub-path that starts and ends at the same node
(a loop).}

Figure~\ref{fig:progmb} presents a program
that finds the maximum element in an array of integers \cite{cha13:ing} and Figure~\ref{fig:graphmba} presents its control flow graph.
The numbers at the start of each line of code in Figure~\ref{fig:progmb} indicate the line's corresponding node in the graph.



{Graph coverage criteria come in two forms,}
control flow coverage criteria and data flow coverage criteria.
\textit{Control flow coverage criteria} 
cover the structure of the graph,
{including}
nodes, edges,
{and specific sub-paths.}
\textit{Data flow coverage criteria} evaluates the flow of data values during program execution.
Data flow coverage criteria provide test requirements for data flow testing by focusing on definitions and uses of variables.
A \textit{definition}, or \textit{def},
is a program location where a value is assigned to a variable.
A \textit{use} is a location where the variable is referenced.
The graph shown in Figure~\ref{fig:graphmba} is annotated with defs and uses associated with its nodes and edges.
} 

\begin{figure}[!htbp]
\centering
\small
\begin{lstlisting}
/* 0 */ int max(int array [], int length)
/* 0 */ {
/* 0 */    int i = 0;
/* 0 */    int max = array[++i];
/* 1 */    while (i < length)
/* 1 */    {
/* 3 */       int rogue = 1
/* 3 */       if (array[i] > max)
/* 5 */          {
/* 5 */           max = array[i];
/* 5 */           print(rogue);
/* 5 */          }
/* 4 */       i = i + 1;
/* 4 */    }
/* 2 */    return max;
/* 6 */ }
\end{lstlisting}
\vspace{-12pt} 
\caption{Example program Max.\label{fig:progmb}}
\end{figure}

\begin{figure}[!htbp]
  \centering
  \includegraphics[scale=0.30]{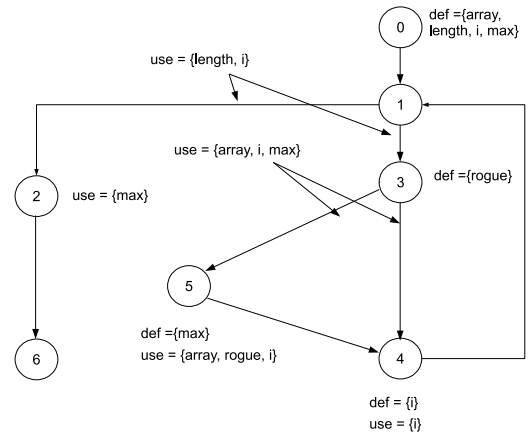}
  \caption{Annotated flow graph for max.eps\label{fig:graphmba}}
\end{figure}



{Data flow testing focuses on the flow of data values from definitions to uses.}
{A variable can be used to compute a value or in a predicate.
Value computations are associated with nodes and predicate computations are associated with edges.}

A \textit{definition-clear} (\textit{def-clear})
path with respect to a variable $x$ is a path where $x$ is not redefined
{along the path.}
{A \textit{du-path} is a simple sub-path (all nodes are different except the first and last nodes) that is def-clear with respect to (wrt) variable $x$.}  {A  du-path with side-trips wrt variable $x$ allows side-trips that are also def-clear wrt $x$.} 

{Data flow test criteria define test requirements as specific du-paths that must be covered.}
{A \textit{DU-association} set $D(d,u,x)$ is a set of du-paths {and du-paths with side-trips} {wrt} variable $x$ that start at node $d$ and end at node $u$.
{If the use is on an edge ($u', u)$,
the DU-associations set is written as $D(d, (u', u), x)$.}
Several data flow testing criteria
{have been invented}
\cite{las83:ing,nta84:ing,rap85:ing,ura88:ing}.
{In this paper,}
we focus on the \textit{all-uses} criterion proposed by Rapps and Weyuker \cite{rap85:ing}.}


\begin{table}[ht]
\begin{center}
\caption{All-uses test requirements for program Max.} 
\label{tab:DUAs}
\resizebox{\linewidth}{!}{
\begin{tabular}{|c|c|c|} \hline
\multicolumn{3}{|c|}{All-uses} \\ \hline
(0, (3,5), array) &
(0, (3,4), array) &
(0, 5, array)  \\ \hline
(0, (1,3), length) &
(0, (1,2), length) &
(0,	(1,3), i) \\ \hline
(0, (1,2), i)&
(0, (3,5), i) &
(0, (3,4), i) \\ \hline
(0, 4, i) &
(0, 5, i) &
(0, 2, max) \\ \hline
(0, (3,5), max)  &
(0, (3,4), max) &
(5, 2, max) \\ \hline
(5, (3,5), max) &
(5, (3,4), max) &
(4, (1,3), i) \\ \hline
(4, (1,2), i) &
(4, (3,5), i) &
(4, (3,4), i) \\ \hline
(4, 4, i) &
(4, 5, i) &
(3, 5, rogue)
\\ \hline
\end{tabular}}
\end{center}
\end{table}

{The all-uses criterion requires that at least one du-path {(or du-path with side-trips)} is
{executed, or \textit{toured},}
for every DU-associations (DUAs) set,
that is,
{each def reaches each use at least once.}
If a test set $T$ includes a du-path {(or du-path with side-trips)} for each DUA
$D$($d$, $u$, $x$) or $D$($d$, ($u'$, $u$), $x$),
it is said to be \textit{adequate}
for the all-uses criterion for program $P$ since all required DUAs were \textit{covered}.}

\section{DUA-node subsumption}
\label{sec:localduanode}

DUA-node subsumption {identifies DUAs that are guaranteed to be covered if a specific node in the graph is visited.
More formally,
DUA $D(d, u, X)$ is \textit{DUA-subsumed} by node $n$
if $D$ is covered on {a} test path that visit node $n$  
{and reach the exit node}.
The set of DUAs subsumed by node $n$
is the set of all DUAs that are covered by all test paths that visit $n$.}

{We find it necessary to allow for interrupted execution,
for example exceptions or other program aborts.
Thus, we distinguish between
\textit{local DUA-node subsumption,}
which is the set of DUAs covered
by all paths that \textbf{reach} $n$,
and
\textit{global DUA-node subsumption,}
which is the
set of DUAs covered by all test paths that both reach $n$
and then continue to the exit node.
The set of globally subsumed DUAs
include DUAs that are DUA-node subsumed by nodes that appear on all paths from node $n$ to the exit node.}

{Figure~\ref{fig:DUAnodeMax}
shows the DUA-subsumption sets for the Max method from
Figures \ref{fig:progmb} and \ref{fig:graphmba}.
Each node contains the locally subsumed DUAs.
For example, if node 5 is reached,
the definition of $array$ at node 0
is guaranteed to have reached the use on edge (3,5).}

\begin{figure}
    \centering
    \includegraphics[scale=0.3]{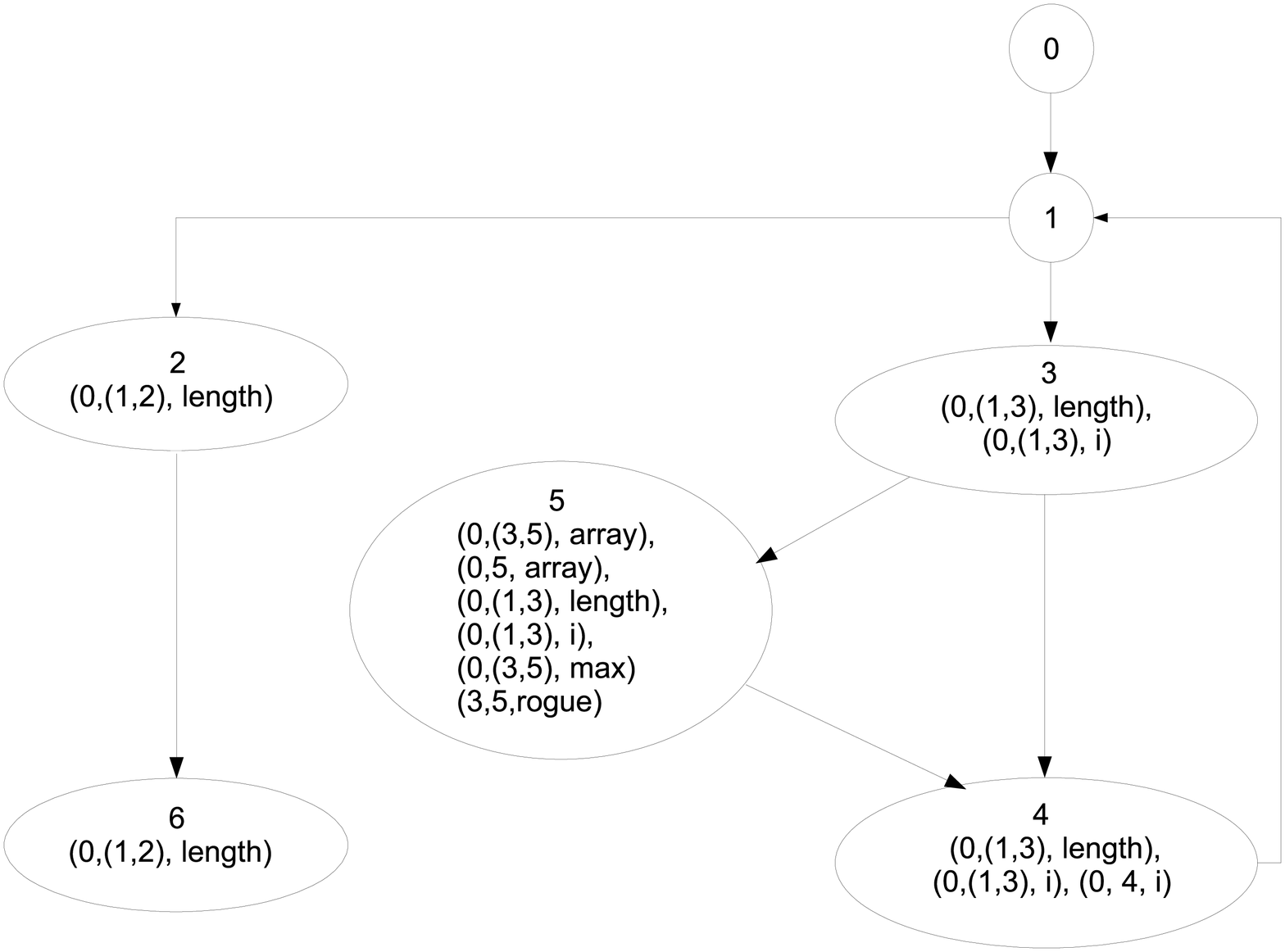}
    \caption{Local DUA-node subsumption for program Max}
    \label{fig:DUAnodeMax}
\end{figure}

{Node $n_i$ \textit{dominates} node $n_j$ if every path from the start node to $n_j$ includes $n_i$ \cite{hec77:ing}.
{
Node $n_j$ \textit{post-dominates} node $n_i$ if any path from $n_j$ to the exit node includes $n_i$.}
A node dominates itself but does not post-dominate itself \cite{ji18}.
In the absence of early program termination,
node 5 post-dominates nodes 4, 1, 2, and 6.
When they are visited by a test path,
the set of DUAs that are globally subsumed by node 5
includes the six DUAs listed in node 5,
plus DUAs (0, 4, i) from node 4 and (0, (1,2), length) from node 2.
Thus, node 5 locally subsumes six DUAs
and globally subsumes eight DUAs.}

Node 5 is the only unconstrained node for program Max.
This means that eight of 24 DUAs will be covered if all nodes of the Max program are visited.
Thus, node coverage would result in a data flow coverage of 33\%.

In the next section, we present a data flow analysis framework that allows one to find the local DUA-node subsumption.


\section{Data flow subsumption framework}
\label{sec:dsf}
Data flow analysis frameworks were devised to solve data flow analysis problems such as reaching definitions, live-variables, and available expressions \cite{kam77:ing}.  Their goal is to determine a \textit{fact} that is valid at the entry or exit of a program point $p$ whenever $p$ is reached \cite{hor87:ing}.  We describe below the \textit{Data Flow Subsumption Framework} (DSF), which can be used to find data flow testing subsumption relationships.

The fact that  DSF  determines is the set of  DUAs  already \textit{covered}  or \textit{available} to be covered at the entrance of  a node $n$ and at the exit of $n$.
In doing so, the DUAs covered along all paths that reach a node $n$---and also after its  traversal---are discovered, as well as  the DUAs  available for coverage in these very same paths. A DUA is available  for coverage in a path if its def node was toured previously in the path and there is no redefinition of its associated variable in the subsequently toured nodes of the path.

In what follows, we present the  formal description of DSF. We start off by briefly discussing the definitions regarding data flow analysis frameworks. Next we show DSF transfer functions and its properties. We then present the Subsumption Algorithm (SA) which solves DFS and finds all DUAs covered by any path that reach a particular node $n$; that is, the local DUA-node subsumption. SA is an adaptation of interactive algorithms to solve data flow analysis frameworks \cite{mar90:ing}. We finish up the section by analyzing SA's complexity.

\subsection{Definitions}

The goal of data flow analysis is to solve data flow problems by assigning  a program fact to each node of the flow graph that will be valid every time the node is reached during every possible execution \cite{hor87:ing}.
Data flow analysis problems can be \textit{forward} or \textit{backward} problems. A forward problem gives the fact valid at the entrance of a particular node whilst backward problems at the exit of the node.  Available expressions and reaching definitions are examples of forward problems; live-variable analysis is a backward problem.  Data flow analysis frameworks model  data flow problems so that particular algorithms are able to find the fact assigned to each node.

The domain of program facts in  this work
is modeled as a bounded (contains no infinite chains)  semi-lattice $V$ with meet\footnote{Meet is a binary operation for which the idempotent, commutative, and associative properties hold. } operation $\wedge$, least element $\perp$ (bottom), greatest element $\top$ (top),  and a partial order $\leq$. The top element is  such that $x \wedge \top$ = $x$ for any element $x$ in $V$.  The least element $\perp$ (bottom) is such that for any x element of $V$,  $x \wedge  \perp $ = $\perp$.

The relationship between values of $V$ before and after a node $n$ of the flow graph is given by the transfer functions of the framework. The family of transfer functions $F: V \rightarrow V$ in a data flow analysis framework has the following properties: (1) $F$ has an identity function $I$, such that $I(x) = x$ for all $x$ in $V$; (2) $F$ is closed under composition; that is, for any two functions $f$ and $g$ in $F$, the function $h$ defined by $h(x) = g(f(x))$ or $h(x) = g \circ f(x)$ is in $F$ \cite{lam07:ing}. A function $f: V \rightarrow V$ is \textit{monotone} iff for all $x$ and $y$ in $V$, $x \leq y$ implies $f(x) \leq (y)$.

In this work, a \textit{monotone data flow analysis framework}  is composed of the following components:
\begin{enumerate}
\item A data flow graph, with specially labeled ENTRY and EXIT node;
\item  a direction (forward or backward) of the data flow;
\item a set of values $V$;
\item a meet operator $\wedge$ and partial order $\leq$;
\item a set of functions $F$ such that the identity function belongs to $F$ and all $f_n$ in $F$, where $f_n$ is the transfer function of node $n$, are monotone and closed under composition; and
\item a constant value $v_{ENTRY} \in \{\perp,\top\}$  or $v_{EXIT} \in \{\perp,\top\}$, representing the boundary condition for a forward or backward framework, respectively.

\end{enumerate}

We denote a function $f_\pi$ associated with path $\pi$, as follows: if $\pi$ is the empty path then $f_\pi(x) = x$ (identity function); if $\pi = (n_1, n_2, \ldots,n_{k-1}, n_k$) then $f_\pi(x) = f_{n_k} f_{n_{k-1}}\ldots f_{n_2} f_{n_1}(x)$.

In general, algorithms that solve data flow analysis problems with data flow analysis frameworks find the \textit{maximum fixed point} (MPF) solution. Let MPF($n$) be the \textit{maximum fixed point} solution for node $n$, then MPF$(n)$ $\leq  f_p(\text{MPF}(p))$ where $p$ is a predecessor of $n$. The ideal result for a data flow analysis framework, though, is the \textit{meet-over-all-paths} solution, the map MOP($n$) = $\wedge$ \{$f_\pi(v_{ENTRY}) \mid \pi$ is a path from $s$ to $n$\} \cite{hor87:ing}.

The MOP solution means that  the meet operation is applied to the composition of the transfer functions along all paths that can reach $n$. Unfortunately, the MOP solution is in general undecidable because a flow graph with cycles may have an unbounded number of paths. Luckily, it can be obtained if the transfer functions $f_n$ of $F$ of the data flow analysis framework are distributive; that is, $f_n(x \wedge y) = f_n(x) \wedge f_n(y)$\footnote{A distributive function is also monotone \cite{lam07:ing}.}. A monotone data flow analysis framework whose transfer functions is distributive is called \textit{distributive data flow analysis framework}.

In the case of DSF, the data flow graph is the flow graph $G(N,E,s,e)$ where ENTRY is the start node $s$ and EXIT is the exit node $e$.  The direction of the data flow is forward and the values of $V$ comprises all subsets obtained from all DUAs required to test a program $P$. The meet operation is set intersection and the partial order is defined by the  $\subseteq$ operation. The $v_{ENTRY}$ is $\varnothing$ because at the entrance of the start node $s$ there is no DUA covered or available for coverage.

We present the DSF transfer functions as follows.

\subsection{Transfer functions}

The transfer functions are a pivotal point of a data flow analysis framework since they ultimately  calculate the fact that is valid at the entrance and exit of a node $n$ when it is reached. To define the transfer functions of  DSF, we associate sets with each node of the flow graph. They were originally  introduced by Chaim and Araujo \cite{cha13:ing} and are defined as follows:

 Let $n \in N$ be a node of a flow graph $G(N,E, s, e)$ of a program $P$ and ($d$,$u$, $X$) or ($d$,($u'$,$u$), $X$)  a DUA required for testing $P$ according to the all-uses criterion.

\begin{description}

\item [Born($n$)]: set of DUAs ($d$,$u$, $X$) or ($d$,($u'$,$u$), $X$)   such that  $d = n$. It encompasses DUAs that become available after the traversal of  its def node.

\item[\textbf{Disabled}($n$)]: set of DUAs ($d$,$u$, $X$) or ($d$,($u'$,$u$), $X$) such that $X$ is defined in $n$ and  $d\neq n$. Set of DUAs whose variable $X$ has been defined at node $n$ so that these DUAs become disabled (or unavailable) for coverage after the traversal of $n$.

\item [PotCovered($n$)]:  set of DUAs  ($d$,$u$, $X$) or ($d$,($u'$,$u$), $X$) so that $u = n$. It comprises DUAs that are potentially covered at $n$ provided they are available; that is, their def node has been previously traversed.

\item [Sleepy($n$)]: set of DUAs ($d$,($u'$,$u$), $X$)  such that  $u' \neq n$. DUAs that are temporarily unavailable (sleepy) because they cannot be covered immediately after the of traversal $n$. It includes DUAs such that the use occurs in an edge ($u'$, $u$) where $u' \neq n$.

\end{description}

\textbf{Born}($n$) sets are similar to the \textbf{gen}($n$) and \textbf{e\_gen}($n$) sets of the reaching definitions and available expressions problems \cite{lam07:ing}. They represent those DUAs that are \textit{born} because the node where their variable is assigned has been toured. Likewise, \textbf{Disabled}($n$) is analogous to \textbf{kill}($n$) and \textbf{e\_kill}($n$) sets of the same data flow analysis problems since they contain those DUAs that are \textit{killed} after the traversal of $n$.

Differently of other data flow analysis problems, DSF needs sets \textbf{PotCovered}($n$) and \textbf{Sleepy}($n$). The first sets represent those DUAs that can \textit{potentially} be covered when a node is traversed. If a DUA is avaliable for coverage when $n$ is reached and it belongs to \textbf{PotCovered}($n$), then it will be covered after the traversal of $n$.

\textbf{Sleepy}($n$) sets aim at blocking some edge DUAs ($d$,($u'$,$u$
),$X$) of being covered after the traversal of a node $n$. For instance, \textbf{Sleepy}(3) comprises all edge DUAs excepting those whose use is either in edge  (3,5) or (3,4), which means that only DUAs with these edges will be allowed to be covered after the traversal of node 3.
Consider that a test has toured the  path (0, 1, 3, 5) of program Max. The next node to be toured is node 4.  \textbf{Sleepy}(5),  according to the definition, contains all edge DUAs required by  Max because there is no DUA with an use in edge (5,4) to be spared. So  when node 4 is toured, \textbf{Sleepy}(5)---the \textbf{Sleepy} set of its predecessor---can be used to block the coverage of edge  DUAs such as (0, (3,4), array) or (4, (3,4), i) by path (0, 1, 3, 5, 4), which can  potentially be covered at node 4, but when the predecessor is node 5 they cannot.

Additionally, the transfer functions of DSF utilizes  two working sets---the current sleepy DUAs (\textbf{CurSleepy}) and the  covered DUAs at $n$  (\textbf{Covered}($n$))---in their definition. Below we present the transfer functions (referred to as TF1 and TF2) and two auxiliary functions (referred to as AF1 and AF2) used to calculate \textbf{CurSleepy} and  \textbf{Covered}($n$).

\vspace*{.25cm}

\begin{enumerate}
\item[\textit{TF1}] $\text{\textbf{IN}}(n) = \bigcap_{p \in PRED(n)} \text{\textbf{OUT}}(p)
$

\item[\textit{AF1}] $\text{\textbf{CurSleepy}} = \bigcup_{p \in PRED(n) \text{and} (p,n) \text{is not a back edge}} \text{\textbf{Sleepy}}(p)
$

\item[\textit{AF2}] $ \text{\textbf{Covered}}(n) =
\bigcap_{p \in PRED(n)} \text{\textbf{Covered}}(p) \cup [(\text{\textbf{IN}}(n) - \text{\textbf{CurSleepy}}) \cap \text{\textbf{PotCovered}}(n)]$

\item[\textit{TF2}] $\text{\textbf{OUT}}(n) = \text{\textbf{Born}}(n) \cup [\text{\textbf{IN}}(x) - \text{\textbf{Disabled}}(n)] \cup \text{\textbf{Covered}}(n)$
\end{enumerate}
where PRED($n$) is the set of nodes of $G$ that are predecessors of node $n$.

\vspace*{.25cm}

The interactive algorithms that solve data flow analysis problems  find the values (facts) associated with sets  \textbf{IN}($n$) (forward problems) and with sets \textbf{OUT}($n$) (backward problems). In DSF, \textbf{IN}($n$)  contains the set of  DUAs  already \textit{covered}  or \textit{available} to be covered at the entrance of a node $n$ when it is reached by any path. \textbf{OUT}($n$), in turn, contains those DUAs covered or available for coverage  after the traversal of $n$ by these paths.
The meet  operation of DSF is the set intersection and it is used to find the value of \textbf{IN}($n$)  by intersecting  \textbf{OUT} sets of the predecessors of $n$ in TF1.

Auxiliary function AF1 calculates the edge DUAs that cannot be covered at node $n$. \textbf{CurSleepy} is the union of the DUAs blocked after the tour of a predecessor $p$ of $n$ provided ($p$,$n$) is not a back edge. A back edge ($p$,$n$) is such that node $n$ dominates node $p$ \cite{lam07:ing}. In Figure
~\ref{fig:graphmba}, edge (4,1) is a back edge. \textbf{CurSleepy} is used to block   edge DUAs of being covered when one cannot predict from which path a node $n$ is reached. However, when ($p$,$n$) is a back edge, one knows that  $n$ is always toured before touring $p$ so that it does not block other edge DUAs of being covered at  $n$.

Consider node 4 of the Max program; it can be reached by nodes 3 and 5 through non-back edges (3,4) and (5,4). However, one does not know from which predecessor it has been reached; thus, no edge DUA with edges (3,4) and (3,5) can be covered at node 4 by all paths that reach 4. \textbf{CurSleepy} at node 4 is the union of \textbf{Sleepy}(3) and \textbf{Sleepy}(5). \textbf{Sleepy}(3) comprises all edge DUAs excepting those whose uses occur in edges (3,5) or (3,4); and \textbf{Sleepy}(5) comprises all edge DUAs as mentioned before. As a result, \textbf{CurSleepy} will contain all edge DUAs; thus, none would be covered at node 4 according to AF1.

Auxiliary function AF2 finds the DUAs that are covered by all paths that reach node $n$; it is divided in two parts that are added by a union set operation. The first part of AF2  intersects \textbf{Covered} sets of the predecessors of $n$. In doing so, node $n$ will \textit{inherit} only DUAs that are covered previously in all paths that reach it. The second part of AF2 calculates the DUAs covered at node $n$. \textbf{IN}($n$) has the DUAs covered and available to be covered in all paths that reach $n$ according to TF1; \textbf{CurSleepy} has the edge DUAs that are blocked at node $n$; and \textbf{PotCovered}($n$) contains DUAs that might be covered at $n$ provided they are in \textbf{IN}($n$). The operations described in the second part of AF2 determine the DUAs covered at $n$: \textbf{CurSleepy} DUAs are removed from \textbf{IN}($n$) and the result is intersected with \textbf{PotCovered}($n$). The remaining DUAs after these operations along with DUAs covered in previously toured nodes give the DUAs covered at node $n$.

Finally, transfer function TF2 determines the \textbf{OUT}($n$) sets; that is, those DUAs that are \textit{forwarded} in the data flow analysis. They are calculated in three parts that are added by union set operations. The first part is the \textbf{Born}($n$) set, which contains the DUAs that become available for coverage at node $n$; that is, their variable is assigned at $n$. The second part is composed of those DUAs that are available in \textbf{IN}($n$) and \textit{survive} node $n$  because  they do not belong to \textbf{Disabled}($n$). The last set added in TF2 is the set of DUAs covered at $n$ (\textbf{Covered}($n$)). All these DUAs are forwarded to its successors in the data flow analysis.

The transfer functions of DSF are only useful if they satisfy the properties required by the data flow analysis frameworks; that is, they should include the identity function and satisfy the properties  of closure under composition  and monotonicity. If they do then there exist interactive algorithms that    solve them.

In Appendix~\ref{ap:tfproperties}, we show that DSF transfer functions include an identity function and is closed under composition. Furthermore, we show that DSF transfer functions are distributive, which guarantees that they are monotone and finds the MOP solution for DSF transfer functions.


\section{Solving the local DUA-node subsumption problem}
\label{sec:sa}
We present an interactive algorithm, described in Algorithm~\ref{alg}, to solve the DUA subsumption problem by means of DSF. It is adapted from classical algorithms \cite{lam07:ing} using the  above transfer functions. Although the DSF solution consists of sets \textbf{IN}($n$) and \textbf{OUT}($n$), we are interested in the final values of the \textbf{Covered}($n$) sets. They contain the DUAs that are covered at node $n$ whenever it is reached from any path beginning at the start node $s$  of $G$. In other words, it finds the \textit{local} DUA-node subsumption; that is,  the annotation presented in Figure~\ref{fig:DUAnodeMax}.
We call Algorithm~\ref{alg} \textit{Subsumption Algorithm} (SA).

\begin{algorithm}[ht]
  \KwIn{Flow graph $G(N,E, s, e)$ of program $P$; sets \textbf{Disabled}($n$), \textbf{Sleepy}($n$), \textbf{PotCovered}($n$), and \textbf{Born}($n$), all DUAs required to test $P$}
  \KwOut{\textbf{Covered}($n$) set for every node $n$}
  \vspace*{0.25cm}
   \textbf{IN}($s$) = $\varnothing$ where $s$ is the start node\;
   \textbf{OUT}($s$) =  \textbf{Born}($s$) where $s$ is the start node\;
   \textbf{Covered}($s$) = $\varnothing$ where $s$ is the start node\;
  \For{each node $n$ other than the start node $s$}{
  \textbf{OUT}($n$) = all DUAs of program $P$\;
  \textbf{Covered}($n$) = all DUAs of program $P$\;
  }
\While{changes to any \textbf{OUT} occur}{
  \For{each node $n \in N$}{
     \textbf{IN}($n$) = $\bigcap_{p \in PRED(n)}$ \textbf{OUT}($p$)\;
     \textbf{CurSleepy} = $\bigcup_{p \in PRED(n) \text{and} (p,n) \text{is not a back edge}}$ \textbf{Sleepy}($p$)\;
     \textbf{Covered}($n$) = $\bigcap_{p \in PRED(n)}$ \textbf{Covered}($p$) $\bigcup$ [(\textbf{IN}($n$) - \textbf{CurSleepy}) $\bigcap$ \textbf{PotCovered}($n$)]\;
   \textbf{OUT}($n$) = \textbf{Born}($n$) $\bigcup$ [\textbf{IN}($n$) - \textbf{Disabled}($n$)] $\bigcup$ \textbf{Covered}($n$)\;
    }
    }
      \For{each node $n \in N$}{
     \textbf{IN}($n$) = $\bigcap_{p \in PRED(n)}$ \textbf{OUT}($p$)\;
     \textbf{CurSleepy} = $\bigcup_{p \in PRED(n) \text{and} (p,n) \text{is not a back edge}}$ \textbf{Sleepy}($p$)\;
     \textbf{Covered}($n$) = $\bigcap_{p \in PRED(n)}$ \textbf{Covered}($p$) $\bigcup$ [(\textbf{IN}($n$) - \textbf{CurSleepy}) $\bigcap$ \textbf{PotCovered}($n$)]\;
    }
\Return{\textbf{Covered}($n$) for every node $n$}
\vspace*{0.25cm}
  \caption{Subsumption algorithm.}
\label{alg}
\end{algorithm}

Lines 1-12 comprise the iterative algorithm to solve DSF transfer functions \cite{lam07:ing}; that is, to find the \textbf{IN} and \textbf{OUT} sets. Lines 13-16 update the \textbf{Covered}($n$) sets.  \textbf{OUT}($n$) has already converged to its final value after leaving the while-loop at Line 7, but  \textbf{Covered}($n$) needs to be updated with the final values of \textbf{OUT}($p$).

When the transfer functions hold the properties required by a monotone data flow framework, interactive algorithms can find the MFP  solution to the particular framework.
By showing that the set of DFS transfer functions contains the identity function and are closed under composition and distributive, these algorithms will find the MOP solution for sets \textbf{IN}($n$) and \textbf{OUT}($n$) of  DSF transfer functions (see Appendix~\ref{ap:tfproperties}). 

However, it rests to show that \textbf{IN}($n$) and \textbf{OUT}($n$) sets contain, respectively, the set of covered DUAs or available to be covered  at the entrance and exit of a node $n$ when it is reached by any path starting at node $s$. We show  in Appendix~\ref{ap:saproof} that SA finds the local DUA-node subsumption.

\subsection{Complexity}

The complexity of SA is given by the number of iterations needed to find the solution of DSF; that is, to terminate the execution of the while-loop of Algorithm~\ref{alg}.  DSF is a data flow analysis framework which has a set  of values $V$, given by the power set of  all DUAs ($U$) required to test a program $P$, and a meet operation $\wedge$, given by set intersection operation ($\cap$). $V$ and the meet operation
constitutes a semi-lattice.

In the worst case, the cost to find the solution for DSF is the product of the height of the semi-lattice and the number of nodes of the flow graph \cite{lam07:ing}. The height of DSF semi-lattice is the number of DUAs. However, DSF shares a characteristic with other practical data flow analysis problems like reaching definitions and available expressions. The value of $V$ (fact)  at each node --- in the DSF case, the covered or available DUAs --- propagates along cycle-free paths.

This property can be expressed as follows.  If a DUA $D$ is removed from \textbf{IN}($n$) or \textbf{OUT}($n$) then there is a cycle-fre path from node $s$ to the beginning or end of  node $n$, respectively, along which $D$ is never covered, or there is an cycle-free path from the node where $D$'s variable is redefined to node $n$.
 To be in \textbf{OUT}($n$), a DUA $D$ should be covered  or available  in all paths from $s$ to $n$, then it suffices an cycle-free path in which $D$ is not covered from $s$ or an cycle-free path from the redefinition of $D$'s variable to $n$ to remove it.

The visit of the nodes in iterative algorithms can be determined in such a way that the information is passed along cycle-free paths in few iterations. If the nodes are visited in a depth-first ordering, the retreating edges\footnote{Edge ($n'$, $n$) is a retreating edge if  node $n'$ has a higher depth-first ordering than $n$. Depth-first ordering is also known as reverse postorder (rPostorder) \cite{hec77:ing}. Every back edge is a retreating edge, but not every retreating edge is a back edge \cite{lam07:ing}}. will be visited at the end, after the information has gone through the cycle-free paths.

Using this approach, the number of iterations will be no greater than the number of retreating edges plus two. However, the number of retreating edges is never greater than the depth of loop nesting in a flow graph  \cite{lam07:ing}. In practice, the nesting of loops seems limited to a small constant for intra-procedural flow graphs \cite{knu71,ryd86}.

However, SA has the additional cost of finding the dominance relationship to determine the back edges. Luckily, the dominance relationship is also modeled as a data flow analysis problem with the same property of propagating its fact (the dominator nodes) along cycle-free paths. Thus, the dominance relationship is found at the same cost. 

As a result, the cost of SA tends to be linear to the number of nodes of the flow graph or $\approx O(\mid N \mid)$.


\section{Conclusions}
\label{sec:concl}
We presented the formal description of the Data flow Subsumption Framework (DSF). DSF models the problem of finding the definition-use associations (DUAs) covered whenever a particular point $p$ (e.g, a node) of a program is reached as a data flow analysis problem. We call this problem \textit{local DUA-node subsumption}.

DSF is a distributive data flow analysis framework. This property allows iterative algorithms to find the Meet-Over-All-Paths (MOP) solution for DSF transfer functions, which implies that the results at a point $p$ are valid for all paths that reach $p$. The  Subsumption Algorithm (SA) presented in this work is an incarnation of one of them. SA utilizes DSF transfer functions to calculate efficiently the local DUA-node subsumption at $\approx O(\mid N \mid)$ cost. 

This result has practical implications since SA allows  to find the subsumption of DUAs by edges (local DUA-edge subsumption)  and by DUAs (DUA-DUA subsumption) at  $\approx O(\mid N \mid)$ and  $\approx O(\mid U \mid  \mid N \mid)$ costs, respectively \cite{cha20a}, where  $N$ is the set of nodes of the program's flow graph and $U$ is the set of DUAs required to test it according to all-uses criterion. 

Previous solutions are significantly more expensive. Santelices and Harrold \cite{san07:ing} propose an algorithm to find DUA-edge subsumption at  $O(\mid U \mid)$ cost. Marré and Bertolino's \cite{marr96:ing,mar97,mar03:ing} and Jiang et al.'s \cite{ji18} solutions  are quadratic to the number of DUAs for the DUA-DUA subsumption, which hamper their application  at industrial settings. Experimental results suggest that SA's application on industry-like applications works at scale and is quite promising \cite{cha20a}.

\section*{Acknowledgment}

Marcos Lordello Chaim was supported by  grant \#2019/ 21763-9, São Paulo Research Foundation (FAPESP). 

The opinions, hypotheses and conclusions or recommendations expressed in this material are the responsibility of the authors and do not necessarily reflect the vision of FAPESP.

\balance

\bibliographystyle{./bibliography/IEEEtran}

\appendix

\subsection{Properties of transfer functions}
\label{ap:tfproperties}

The set of transfer functions $F$ of a monotone data flow analysis framework should satisfy particular conditions or properties, namely, $F$ should include an identity function and all $f_n \in F$ should  be monotone and closed under composition.

The importance of these properties is to allow the aggregation of data flow information at path level.  For  a  path  $s, \ldots, n_i, \ldots  n$, the first property, the identity function, implies that  $I(x)$ is associated with an empty path which keeps the fact unaltered. Being the transfer functions closed under composition implies that the transfer function of the path: $f_{s, \ldots, n_i, \ldots, n}(x) = f_{n} \circ \ldots \circ f_{n_i} \ldots \circ f_{s}$(x) is also a transfer function.

If the functions belonging to $F$ are monotone then,   for all $x$ and $y$ in $V$, $x \leq y$ implies $f(x) \leq f(y)$, where  $f: V \rightarrow V \in F$. This property allows iterative algorithms in general find the \textit{maximum fixed solution} (MPF) for data flow analysis problems. A stronger property of a data flow analysis framework is the \textit{distributive condition} given by:

\[
f(x \wedge y) = f(x) \wedge f(y)
\]

for all $x$ and $y$ in $V$ and $f$ in $F$.

A distributive framework is necessarily monotonic. The closure under composition and the distributive condition have an important implication on the iterative algorithm that solves DSF.  Because  DFS holds these properties, the iterative algorithm will find the \textit{meet-over-all-paths} (MOP) for DSF, which means that the solution is valid for any path taken from the start node $s$ to a node $n$.

In the case of DSF, the meet operator $\wedge$ is the set intersection $\cap$ operation. To simplify the manipulation of the DSF transfer functions, we are going to represent them as:

\begin{equation}\label{eq:f}
f(x) = B \cup (x - D) \cup C \cup (x - CS) \cap P
\end{equation}

where $x$ is the value of \textbf{IN}($n$), $f(x)$ is \textbf{OUT}($n$),  and $B$, $D$, $C$, $CS$, and $P$ are, respectively, the values of \textbf{Born}($n$), \textbf{Disabled}($n$),  the result of the intersection of the \textbf{Covered}($p$) where $p$ is a predecessor of $n$,  \textbf{CurSleepy}, and \textbf{PotCovered}($n$). All sets have fixed values, excepting the set $x$.

Before proving the  properties above, we highlight the fixed values of the DSF transfer functions. \textbf{Born}($n$), \textbf{Disabled}($n$), and \textbf{PotCovered}($n$) are fixed values associated to node $n$. \textbf{CurSleepy} is calculated by the union of the \textbf{Sleepy} sets of the predecessors of $n$. Since \textbf{Sleepy} sets are constant and associated with the predecessors of $n$, \textbf{CurSleepy} sets could be calculated beforehand for every node $n$. 

A less obvious fixed value is the result of $\bigcap_{p \in PRED(n)} \text{\textbf{Covered}}(p)$. Although the value of the \textbf{Covered}($n$) might change in every iteration of the algorithm, the value of the \textbf{Covered} sets of $n$'s predecessors are fixed when \textbf{OUT}($n$) is calculated as the output of a transfer function $f(x)$. In this sense,  \textbf{Covered}($p$) sets are as fixed as \textbf{Born}($n$), \textbf{Disable}($n$), \textbf{Sleepy}($n$), \textbf{PotCovered}($n$), and \textbf{OUT}($p$) during the calculation of \textbf{OUT}($n$) in a particular iteration.

In what follows, we show that  DSF transfer functions  comply with the properties that characterize it a distributive data flow analysis framework.

\subsubsection{Identity function}

For a identity function $I(x)$, such that $I(x) = x$,  be part of $F$, it  suffices $B$, $D$, $C$, $CS$, and $P$ all  be the empty set.

\subsubsection{Closure under composition}

 The set $F$ of transfer functions is closed under composition iff, for any two functions $f$ and $g$ in $F$, the function $h$ defined by $h(x) = g(f(x))$ is in $F$. To show the closure under composition, let us suppose we have two functions:
\begin{equation}\label{eq:f1}
f_1(x) = B_1 \cup (x - D_1) \cup C_1 \cup (x - CS_1) \cap P_1
\end{equation}
  and
\begin{equation}\label{eq:f2}
f_2(x) = B_2 \cup (x - D_2) \cup C_2 \cup (x - CS_2) \cap P_2.
\end{equation}
 Then the composition of $f_2 \circ f_1$ will be:
\begin{quote}
$f_2(f_1(x)) = B_2 \cup ((B_1 \cup (x - D_1) \cup (C_1 \cup (x - CS_1) \cap P_1) - D_2) \cup C_2 \cup ((B_1 \cup (x - D_1) \cup C_1 \cup (x - CS_1) \cap P_1) - CS_2) \cap P_2$.
\end{quote}

Note that $ A \cup B \cup C - D$ is equivalent to $ (A - D) \cup (B - D) \cup (C - D)$.
We can rewrite $f_2(f_1(x))$ in the following steps.

\begin{enumerate}
\item $f_2(f_1(x)) = B_2 \cup (B_1 - D_2 \cup (x - D_1)  - D_2 \cup (C_1 \cup ((x - CS_1) \cap P_1) - D_2)) \cup (C_2 \cup (B_1 - CS_2 \cup (x - D_1) - CS_2 \cup (C_1 \cup (x - CS_1) \cap P_1 - CS_2)) \cap P_2)$.

Note that $ A - B - C$ is equivalent to $ A - (B \cup C)$.

\item $f_2(f_1(x)) = B_2 \cup (B_1 - D_2) \cup (x - (D_1 \cup D_2)) \cup (C_1 -D_2 \cup (x - CS_1) \cap P_1 - D_2)) \cup (C_2 \cup ((B_1 - CS_2) \cup (x - (D_1 \cup CS_2)) \cup (C_1 \cup (x - CS_1) \cap P_1 - CS_2)) \cap P_2)$.

Note that $(x - CS_1) \cap P_1 - D_2$ is equivalent to  $(x - (CS_1 \cup D_2)) \cap P_1$ and $(x - CS_1) \cap P_1 - CS_2$ to $(x - (CS_1\cup CS_2)) \cap P_1$.


\item $f_2(f_1(x)) = B_2 \cup (B_1 - D_2) \cup (x - (D_1 \cup D_2)) \cup ((C_1 - D_2) \cup ((x - (CS_1 \cup D_2)) \cap P_1) \cup (C_2 \cup ((B_1 - CS_2) \cup (x - (D_1 \cup CS_2)) \cup ((C_1 - CS_2) \cup (x - (CS_1 \cup CS_2)) \cap P_1) \cap P_2)$.

\item $f_2(f_1(x)) = B_2 \cup (B_1 - D_2) \cup (x - (D_1 \cup D_2)) \cup ((C_1 - D_2) \cup ((x - (CS_1 \cup D_2)) \cap P_1) \cup (C_2  \cup ((C_1 \cup B_1 - CS_2) \cup (x - (D_1 \cup CS_2)) \cup (x - (CS_1 \cup CS_2)) \cap P_1) \cap P_2)$.

\item  $f_2(f_1(x)) = B_2 \cup (B_1 - D_2) \cup (x - (D_1 \cup D_2)) \cup ((C_1 - D_2) \cup ((x - (CS_1 \cup D_2)) \cap P_1) \cup (C_2  \cup ((C_1 \cup B_1 - CS_2) \cap P_2 \cup (x - (D_1 \cup CS_2))\cap P_2 \cup (x - (CS_1 \cup CS_2)) \cap P_1 \cap P_2)$.


\item $f_2(f_1(x)) = B_2 \cup (B_1 - D_2) \cup (x - (D_1 \cup D_2)) \cup (C_1 - D_2) \cup C_2 \cup (C_1 \cup B_1 - CS_2) \cap P_2   \cup (x - (CS_1 \cup D_2)) \cap P_1 \cup (x - (D_1 \cup CS_2))\cap P_2 \cup (x - (CS_1 \cup CS_2)) \cap P_1 \cap P_2$.

Note that $(x - (CS_1 \cup D_2)) \cap P_1$ is equivalent to $(x - D_2) \cap P_1  \cup (x - CS_1) \cap P_1$.

\item $f_2(f_1(x)) = B_2 \cup (B_1 - D_2) \cup (x - (D_1 \cup D_2)) \cup (C_1 - D_2) \cup C_2  \cup (C_1 \cup B_1 - CS_2) \cap P_2 \cup (x - D_2) \cap P_1  \cup (x - CS_1) \cap P_1  \cup (x - (D_1 \cup CS_2))\cap P_2 \cup (x - (CS_1 \cup CS_2)) \cap P_1 \cap P_2$.

Note that $(x - (CS_1 \cup CS_2)) \cap P_1 \cap P_2 \subseteq  (x - CS_1) \cap P_1$.

\item  $f_2(f_1(x)) = B_2 \cup (B_1 - D_2) \cup (x - (D_1 \cup D_2)) \cup (C_1 - D_2) \cup C_2 \cup (C_1 \cup B_1 - CS_2) \cap P_2 \cup (x - D_2) \cap P_1  \cup (x - CS_1) \cap P_1  \cup (x - D_1)\cap P_2 \cup (x - CS_2) \cap P_2$.

\item $f_2(f_1(x)) = B_2 \cup (B_1 - D_2) \cup (x - (D_1 \cup D_2)) \cup (C_1 - D_2) \cup C_2 \cup (C_1 \cup B_1 - CS_2) \cap P_2   \cup (x - (CS_1 \cup D_2)) \cap P_1 \cup (x - (D_1 \cup CS_2))\cap P_2$.

Note that $(x - (CS_1 \cup D_2)) \cap P_1 \cup (x - (D_1 \cup CS_2))\cap P_2$ is equivalent to $(x - (CS_1 \cup D_2)) \cup (x - (D_1 \cup CS_2)) \cap (P_1 \cup P_2$).

\item $f_2(f_1(x)) = B_2 \cup (B_1 - D_2) \cup (x - (D_1 \cup D_2)) \cup (C_1 - D_2) \cup C_2 \cup (C_1 \cup B_1 - CS_2) \cap P_2 \cup (x - (CS_1 \cup D_2)) \cup (x - (D_1 \cup CS_2)) \cap (P_1 \cup P_2$).

\item $f_2(f_1(x)) = B_2 \cup (B_1 - D_2) \cup (x - (D_1 \cup D_2)) \cup (C_1 - D_2) \cup C_2 \cup (C_1 \cup B_1 - CS_2) \cap P_2 \cup (x - (D_1 \cup D_2 \cup CS_1 \cup  CS_2)) \cap (P_1 \cup P_2$).

\end{enumerate}

We can rename $f_2(f_1(x))$ to $f_{c}(x)$ and also do the following renaming:
\begin{itemize}
\item $B_c$ = $B_2 \cup (B_1 - D_2)$;
\item $D_c$ =$D_1 \cup D_2$;
\item $C_c$ =  $ (C_1 - D_2) \cup C_2 \cup (C_1 \cup B_1 - CS_2) \cap P_2$;
\item $CS_c$ = $ D_1 \cup D_2 \cup CS_1 \cup  CS_2$; and
\item $P_c$ = $ P_1 \cup P_2$.
\end{itemize}
so that $f_c(x)$ = $B_c \cup (x - D_c) \cup C_c \cup (x - CS_c) \cap P_c$, the result of the composition of $f_1(x)$ and $f_2(x)$, belongs to the set of transfer functions $F$.

Using that $fc(x) = f_1 \circ f_2 (x) \in F$, one can show  by induction that the composition $f_{s, \ldots, n_i, \ldots, n} = f_{n} \circ \ldots \circ f_{n_i} \ldots \circ f_{s}$ also  is part of $F$.

\subsubsection{Distributive condition}

Let $y$ and $z$ be sets of DUAs in DSF and $f(x) = B \cup (x - D) \cup C \cup (x - CS) \cap P$ a transfer function belonging  to $F$. We show the distributive condition of DSF by verifying that:

\begin{quote}
$f(y \cap z) = B \cup (y \cap z - D) \cup C \cup (y \cap z - CS) \cap P$; and that

$f(y) \cap f(z) = (B \cup (y  - D) \cup C \cup (y  - CS) \cap P) \cap (B \cup (z  - D) \cup C \cup (z  - CS) \cap P)$

are equal.

\end{quote}

We start by rewriting  $f(y) \cap f(z)$. Note that $B$ and $C$ occurs in both $f(y)$ and $f(z)$, so they can be factored out.

\begin{quote}

$f(y) \cap f(z) = B \cup  C \cup ((y  - D) \cup (y  - CS) \cap P) \cap ((z  - D)  \cup (z  - CS) \cap P)$.

\end{quote}
Note that $(A_1 \cup B_1) \cap (A_2 \cup B_2)$ is equivalent to $(A_1 \cap A_2) \cup (B_1 \cap B_2)$. Then
\begin{quote}

$f(y) \cap f(z) = B \cup  C \cup ((y  - D)  \cap (z  - D)) \cup ((y  - CS) \cap P)   \cap (z  - CS) \cap P))$.

\end{quote}

which leads to

\begin{quote}

$f(y) \cap f(z) = B \cup  C \cup (y \cap z - D)   \cup ((y \cap z  - CS) \cap P) =  B \cup  (y \cap z - D) \cup C \cup  (y \cap z  - CS) \cap P = f(y \cap z)$.

\end{quote}

Thus, the distributive condition holds for DSF.

\subsection{Proof of the Subsumption Algorithm}
\label{ap:saproof}

\begin{thm}
\label{thmalg}
The Subsumption Algorithm (SA), Algorithm \ref{alg},  is correct and  finds the DUAs covered at node $n$ when it is reached by any path starting at the start node $s$. That is, the sets \textbf{Covered}($n$) contain the DUAs covered in all paths $s \ldots n$ where $s$ is the start node and $n$ is a node of the program's flow graph.
\end{thm}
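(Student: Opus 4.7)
\medskip

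\noindent\textbf{Proof plan.} The plan is to split the argument into two largely independent pieces: (i) a \emph{convergence/MOP} part that leverages the framework results already established in Appendix~\ref{ap:tfproperties}, and (ii) a \emph{semantic} part that shows the fixed point actually represents the intended object (DUAs covered on every path from $s$ to $n$). First I would verify that the initialisation of Algorithm~\ref{alg} is the standard ``top'' initialisation for a forward distributive framework: \textbf{IN}($s$)$=\varnothing$, \textbf{OUT}($s$)=\textbf{Born}($s$), and every other \textbf{OUT}($n$) and \textbf{Covered}($n$) is initialised to the universal set $U$ of all required DUAs. Because the DSF semi-lattice is bounded, the transfer functions are monotone, and the meet is set intersection, standard iterative-algorithm arguments (e.g.\ \cite{lam07:ing}) guarantee the while-loop terminates and produces the MFP of the framework. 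Since DSF is distributive (Appendix~\ref{ap:tfproperties}), the MFP coincides with the MOP, so after the while-loop, $\textbf{IN}(n) = \bigwedge_{\pi: s \to n} f_\pi(\varnothing)$ and analogously for \textbf{OUT}($n$).

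Next I would argue that the extra pass on Lines~13--16 is necessary and correct: the \textbf{Covered}($n$) sets depend on \textbf{Covered}($p$) of predecessors, which may still be under-approximated at the moment \textbf{OUT}($n$) converged; re-evaluating AF1/AF2 once more using now-final \textbf{OUT}($p$) and \textbf{Covered}($p$) sets settles \textbf{Covered}($n$) to its MOP value. A short monotonicity remark together with the fact that a single additional visit in topological-like order suffices (because predecessors' values are final) closes this point.

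The semantic core is an induction on path length. Define $\mathrm{Cov}(\pi)$ and $\mathrm{Avail}(\pi)$ as the DUAs truly covered (resp.\ available at the endpoint) by a test that tours the path $\pi$ from $s$. For a single path $\pi = (s, n_1, \dots, n_k)$ I would show, by induction on $k$, that $f_\pi(\varnothing) = \mathrm{Cov}(\pi) \cup \mathrm{Avail}(\pi)$ and that the ``Covered part'' of the composition equals $\mathrm{Cov}(\pi)$. The inductive step is a case analysis on the role of $n_k$: \textbf{Born}($n_k$) injects newly born DUAs exactly when the variable is defined at $n_k$; subtraction of \textbf{Disabled}($n_k$) correctly kills DUAs whose variable is redefined; intersection with \textbf{PotCovered}($n_k$) correctly restricts to DUAs whose use is at $n_k$; and the subtraction of \textbf{CurSleepy} correctly excludes edge-DUAs whose use edge is $(u', n_k)$ with $u'$ not being the predecessor actually used in $\pi$. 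Once this holds path-wise, the MOP identity gives $\textbf{Covered}(n) = \bigcap_{\pi: s \to n} \mathrm{Cov}(\pi)$, which is exactly the local DUA-node subsumption set as defined in Section~\ref{sec:localduanode}.

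The main obstacle, and the step requiring real care, is the treatment of \textbf{CurSleepy} and back-edges in the semantic argument. For a single straight-line path it is immediate that we may discard \textbf{Sleepy} contributions of any non-used predecessor; the subtlety is that AF1 unions \textbf{Sleepy}($p$) over all non-back-edge predecessors $p$, collapsing path information that the MOP argument would otherwise keep separate. I would therefore prove a small lemma: for every path $\pi$ reaching $n_k$ via a non-back edge $(p, n_k)$, any edge-DUA $D(d,(u',n_k),X)$ with $u' \neq p$ lies in \textbf{Sleepy}($p'$) for some non-back-edge predecessor $p'$ of $n_k$, so it is correctly excluded from $\textbf{Covered}(n_k)$ in the intersection over \emph{all} such paths. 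The back-edge exclusion is justified by the dominance property: if $(p,n)$ is a back edge then $n$ dominates $p$, so any path reaching $p$ has already visited $n$, and blocking uses at $n$ because of what happens at $p$ would be spurious. Together, the MOP equation plus this lemma establish the theorem.
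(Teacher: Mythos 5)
Your plan is sound and reaches the same conclusion, but it is organized quite differently from the paper's proof. The paper does not invoke the MFP\,=\,MOP theorem as a black box for termination and convergence: it proves termination directly, by an explicit induction showing $\textbf{OUT}(n)^{i+1} \subseteq \textbf{OUT}(n)^{i}$ and $\textbf{Covered}(n)^{i+1} \subseteq \textbf{Covered}(n)^{i}$, so the decreasing chains must stabilize. For correctness it does not perform your path-length induction on $f_\pi(\varnothing)$; instead it adapts the two-case ``guidelines'' used for available expressions: (1) if a DUA $D$ is removed from $\textbf{IN}(n)$/$\textbf{OUT}(n)$ there is a path to $n$ on which $D$ is not covered or its variable is redefined, and (2) if $D$ remains then on every path it is covered or available, arguing each case by inspecting which term of TF2/AF2 could have dropped or retained $D$. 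Your route buys a cleaner semantic statement---$f_\pi(\varnothing)=\mathrm{Cov}(\pi)\cup\mathrm{Avail}(\pi)$ followed by the MOP intersection---and your proposed lemma about \textbf{CurSleepy} and back edges makes explicit a subtlety the paper only discusses informally in Section~\ref{sec:dsf}. One caution if you execute the plan: $\textbf{Covered}(n)$ is not literally a component of the framework whose MOP solution the appendix establishes (there the predecessor term $C=\bigcap_{p}\textbf{Covered}(p)$ is treated as a fixed constant per iteration), so the step ``the MOP identity gives $\textbf{Covered}(n)=\bigcap_{\pi}\mathrm{Cov}(\pi)$'' does not follow from distributivity of the $\textbf{IN}/\textbf{OUT}$ functions alone; you need a separate argument that the $\textbf{Covered}$ recursion stabilizes to the right value once the $\textbf{OUT}$ sets are final, which is exactly the role of Lines 13--16. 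Both you and the paper treat that final pass somewhat loosely (a single sweep suffices only if predecessors' $\textbf{Covered}$ values are already final when used), so that is the point to tighten rather than a defect unique to your approach.
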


\begin{proof} In the proof presented below, we utilize the notation \textbf{OUT}($n$)$^i$ (extensive to the other sets) to represent the value of set \textbf{OUT}($n$) after the \textit{i-th} iteration of the while-loop of Algorithm~\ref{alg}.  \textbf{OUT}($n$)$^0$ refers to the value of the set after the initialization at lines 2--6 before the first iteration of the while. Additionally, we refer to $U$ as the set of all DUAs required to test program $P$ according to the all-uses criterion.

\textbf{Termination.}
Firstly, we show that the algorithm terminates by proving that  sets \textbf{OUT}($n$) eventually become constant and the while-loop condition becomes false.
To achieve such a goal, we show by induction that \textbf{OUT}($n$)$^{i+1}$ $\subseteq $  \textbf{OUT}($n$)$^{i}$, and eventually they become constant. To facilitate the proof, we also show that  \textbf{Covered}($n$)$^{i+1}$ $\subseteq $  \textbf{Covered}($n$)$^{i}$.

\textit{Basis.} \textbf{OUT}($n$)$^{1}$ $\subseteq $  \textbf{OUT}($n$)$^{0}$ and \textbf{Covered}($n$)$^{1}$ $\subseteq $  \textbf{Covered}($n$)$^{0}$.


\textit{Basis proof.}

Let us consider first the case for the start node $s$. The values of \textbf{IN}($s$)$^{0}$ and \textbf{OUT}($s$)$^{0}$ are initiated with $\varnothing$ and  with \textbf{Born}($s$)  at lines 1 and 2, respectively.  \textbf{IN}($s$) remains unaltered when line 9 is executed because $s$ does not have predecessors; thus, \textbf{IN}($s$)$^{1} = \varnothing$.  Due to the same reason, the intersection of  \textbf{Covered} sets is also empty at line 11.
 Because   \textbf{IN}($s$)$^{1}$  is $\varnothing$,  (\textbf{IN}($n$)$^{1}$  - \textbf{CurSleepy}) $\bigcap$ \textbf{PotCovered}($n$) is also $\varnothing$; as a result, \textbf{Covered}($s$)$^{1}$ becomes empty.  Thus, \textbf{Covered}($s$)$^{1}$ = \textbf{Covered}($s$)$^{0}$
= $\varnothing$; therefore, \textbf{Covered}($s$)$^{1}$ $\subseteq$ \textbf{Covered}($s$)$^{0}$.

 \textbf{OUT}($s$)$^{1}$, in turn, is equal to \textbf{Born}($s$) at line 12 since (\textbf{IN}($n$)$^{1}$ - \textbf{Disabled}($n$)) and \textbf{Covered}($s$)$^{1}$  are both $\varnothing$. Hence,  \textbf{OUT}($s$)$^0$ = \textbf{OUT}($s$)$^{1}$ =  \textbf{Born}($s$). Indeed, \textbf{OUT}($s$) is constant and equals to \textbf{Born}($s$); hence, \textbf{OUT}($s$)$^{1}$ $\subseteq $  \textbf{OUT}($s$)$^{0}$.

Let $n$ be a node such that $n \neq s$. At line 9, \textbf{IN}($n$)$^{1}$ is calculated by \textit{anding} the OUT$^{0}$ sets of its predecessors. All \textbf{OUT}($n$)$^{0}$ sets, though, are $U$, excepting \textbf{OUT}($s$)$^{0}$  whose value is \textbf{Born}($s$), and remains so.  Since \textbf{Born}($s$) represents DUAs that born  (become available) at node $s$, they are limited to $U$; that is, \textbf{Born}($s$) $\subseteq$ $U$. As a result,  \textbf{IN}($n$)$^{1}$, such that $n \neq s$, will be a subset of $U$ after the \textit{intersection} at line 9.

The intersection of  \textbf{Covered}($p$)$^{0}$ at line 11 where $p \in $ PRED($n$), is a subset of $U$ for the same reason \textbf{IN}($n$)$^{1}$ is: \textbf{Covered}($s$)$^{0}$ is $\varnothing$ and remains so  and all \textbf{Covered}($n$)$^{0}$, $n \neq s$,  are initialized with $U$.  Thus, the \textit{and} of the Covered sets of the predecessors of $n$ is a subset of $U$. \textbf{CurSleepy} is a constant value comprising of only edge DUAs; thus, \textbf{CurSleepy} is a subset of $U$, and it only shrinks \textbf{IN}($n$). Hence, [(\textbf{IN}($n$) - \textbf{CurSleepy}) $\bigcap$ \textbf{PotCovered}($n$)] is also a subset of  $U$  because \textbf{IN}($n$)$^{1}$, \textbf{CurSleepy}, and \textbf{PotCovered}($n$) are all subsets of $U$. Therefore,  \textbf{Covered}($n$)$^{1}$ calculated at line 11 is a subset of $U$. Since \textbf{Covered}($n$)$^{0}$, $n \neq s$, was initialized with $U$ at line 6,  \textbf{Covered}($n$)$^{1}$ $\subseteq$ \textbf{Covered}($n$)$^{0}$.

At line 12, \textbf{OUT}($n$)$^{1}$ is calculated. We already know that \textbf{Covered}($n$)$^{1}$ and \textbf{Born}($n$) are limited to $U$. (\textbf{IN}($n$)$^{1}$ - \textbf{Disabled}($n$)) is also a subset of $U$ because \textbf{IN}($n$)$^{1}$ is, and \textbf{Disabled}($n$) ---  a constant comprising DUAs unavailable due to the redefinition of their variable  at $n$ --- diminishes \textbf{IN}($n$)$^{1}$. Hence, \textbf{OUT}($n$)$^{1}$ is a subset of $U$. However, \textbf{OUT}($n$)$^{0}$, $n \neq s$, was initialized with $U$ (line 5); therefore,  \textbf{OUT}($n$)$^{1} \subseteq $ \textbf{OUT}($n$)$^{0}$, which completes the proof of the Basis.

\textit{Induction.}  Assuming  \textbf{OUT}($n$)$^{i}$ $\subseteq $  \textbf{OUT}($n$)$^{i-1}$ and   \textbf{Covered}($n$)$^{i}$ $\subseteq $  \textbf{Cover\-ed}($n$)$^{i-1}$, then \textbf{OUT}($n$)$^{i+1}$ $\subseteq $  \textbf{OUT}($n$)$^{i}$ and \textbf{Covered}($n$)$^{i+1}$ $\subseteq $  \textbf{Covered}($n$)$^{i}$.

\textit{Induction proof.}

At line 9, \textbf{IN}($n$)$^{i+1}$ is calculated by \textit{anding} the \textbf{OUT}($p$)$^{i}$ sets of its predecessors $p$.  Since \textbf{OUT}($p$)$^{i}$ $\subseteq$ \textbf{OUT}($p$)$^{i-1}$ and \textbf{IN}($n$)$^{i}$ is the \textit{anding} of \textbf{OUT}($p$)$^{i-1}$ then \textbf{IN}($n$)$^{i+1}$ $\subseteq$ \textbf{IN}($n$)$^{i}$.

At line 11, \textbf{Co\-ver\-ed}($n$)$^{i+1}$ is calculated by the union of the intersection of the \textbf{Covered} sets of its predecessors with  [(\textbf{IN}($n$) - \textbf{CurSleepy}) $\bigcap$ \textbf{PotCovered}($n$)]. \textbf{Co\-ver\-ed}($n$)$^{i+1}$ does not grow due to the contribution of the covered sets of its predecessors since we assume \textbf{Co\-ver\-ed}($p$)$^{i}$  $\subseteq$ \textbf{Covered}($p$)$^{i-1}$, which implies  $\bigcap_{p \in PRED(n)}$ \textbf{Covered}($p$)$^{i}$ $\subseteq$ $\bigcap_{p \in PRED(n)}$ \textbf{Co\-ver\-ed}($p$)$^{i-1}$.

\textbf{Co\-ver\-ed}($n$)$^{i+1}$ also receives the result of  [(\textbf{IN}($n$)$^{i+1}$ - \textbf{CurSleepy}) $\bigcap$ \textbf{PotCo\-ver\-ed}($n$)]. Since \textbf{IN}($n$)$^{i+1}$ $\subseteq$ \textbf{IN}($n$)$^{i}$ and \textbf{CurSleepy} and \textbf{PotCovered}($n$)] are constant values, we can conclude that [(\textbf{IN}($n$)$^{i+1}$ - \textbf{CurSleepy}) $\bigcap$ \textbf{PotCover\-ed}($n$)] $\subseteq$ [(\textbf{IN}($n$)$^{i}$ - \textbf{CurSleepy}) $\bigcap$ \textbf{PotCover\-ed}($n$)]. Therefore,  \textbf{Co\-ver\-ed}($n$)$^{i+1}$  $\subseteq$ \textbf{Co\-ver\-ed}($n$)$^{i}$.

\textbf{OUT}($p$)$^{i+1}$ is calculated by \textbf{Born}($n$) $\bigcup$ [\textbf{IN}($n$)$^{i+1}$ - \textbf{Disabled}($n$)] $\bigcup$ \textbf{Cover\-ed}($n$)$^{i+1}$ at line 12. As \textbf{Born}($n$) and \textbf{Disabled}($n$)  are constant values and \textbf{IN}($n$)$^{i+1}$ $\subseteq$ \textbf{IN}($n$)$^{i}$ and \textbf{Co\-ver\-ed}($n$)$^{i+1}$  $\subseteq$ \textbf{Co\-ver\-ed}($n$)$^{i}$, \textbf{OUT}($p$)$^{i+1}$ will also be a subset of \textbf{OUT}($p$)$^{i}$; that is, \textbf{OUT}($p$)$^{i+1}$ $\subseteq$ \textbf{OUT}($p$)$^{i}$, which completes the proof of the Induction.

It remains to prove that sets \textbf{OUT}($n$) eventually stop changing. Let us suppose \textbf{OUT}($n$)$^{i} \neq \varnothing$ $\subseteq$ \textbf{OUT}($n$)$^{i-1}$. In the \textit{(i+1)-th} iteration of the while-loop, we have two options:  (1)  \textbf{OUT}($n$)$^{i+1}$ $\subset$ \textbf{OUT}($n$)$^{i}$; or (2) \textbf{OUT}($n$)$^{i+1}$ = \textbf{OUT}($n$)$^{i}$. In case (1),  \textbf{OUT}($n$) is reduced even further; in case (2), \textbf{OUT}($n$) does not change. However,   in the next iterations, if \textbf{OUT}($n$)  keeps shrinking, it will eventually be  $\varnothing$; and if \textbf{OUT}($n$) $\neq$ $\varnothing$ keeps the same value, it has achieved its final value. Both cases lead to the falsehood of the loop condition, terminating the execution of the algorithm.

Lines 13-16  always terminate when the last node of the flow graph is visited.

\textbf{Correctness.}

The  solution of DSF---the correct final values of \textbf{IN}($n$) and \textbf{OUT}($n$)---takes into account two cases described below. We adapted them from the guidelines to prove the available expressions framework \cite{lam07:ing}.

\begin{enumerate}
\item If a DUA $D$ is removed from \textbf{IN}($n$) or \textbf{OUT}($n$) then there is a path from node $s$ to the beginning or end of  node $n$, respectively, along which $D$ (a) is never covered, or (b) after being available for coverage, its variable might be redefined.

\item If a DUA $D$ remains in  \textbf{IN}($n$) and \textbf{OUT}($n$), then along every path from node $s$ to the beginning or end of node $n$, respectively, (a) $D$ is covered or (b) is available for coverage.
\end{enumerate}

We prove below that  SA (Algorithm~\ref{alg}) solves DSF and that the \textbf{Covered}($n$) sets contain all DUAs covered at a node $n$ by all paths that reaches $n$.

\textit{Correctness proof}.

For every node $n$ of a flow graph $G$ = ($N$, $E$, $s$, $e$) of a program $P$, there exists at least one path $\pi = (s,...,n)$ from $s$, the start node of $G$, to $n \in N$. Let $D$($d$,$u$,$X$) or $D$($d$, ($u'$,$u$), $X$)  be a DUA required for testing $P$ according to all-uses criterion.

Before we start analyzing the two cases, we discuss the values of the sets at the start node $s$. As shown in the termination's proof, \textbf{IN}($s$), \textbf{OUT}($s$), and \textbf{Covered}($s$) are initialized, respectively, with $\varnothing$, \textbf{Born}($s$), and $\varnothing$. These values remain unaltered until the end of the while-loop.  These values are correct: There is no covered or available DUA before the start node; thus, \textbf{IN}($s$) = $\varnothing$; and \textbf{OUT}($s$) should contain only the DUAs that become available (are born) at node $s$ since no DUA is covered at the start node (\textbf{Covered}($s$) = $\varnothing$).

\textit{Case 1}.

Initially, all OUT sets have value $U$, excepting \textbf{OUT}($s$). \textbf{IN}($n$) is calculated at line 9 by the intersection of  \textbf{OUT}($p$), being $p$ a predecessor of $n$. So, if a DUA $D$ is removed from one of the \textbf{OUT}($p$) sets, due to Cases 1(a) or 1(b) above, it will not be part of  \textbf{IN}($n$). Therefore, the focus of the proof below is on the  \textbf{OUT}($n$) sets.

\textit{Case 1(a)}. This case deals with those DUAs that might not be included in \textbf{OUT}($n$) because they do not belong to \textbf{Covered}($n$).

All \textbf{Covered}($n$) are initialized with value $U$, excepting Cover\-ed($s$) = $\varnothing$ (lines 3 and 6). However,  a new \textbf{Covered}($n$)  is calculated (line 11) at every iteration of the algorithm by the intersection of the \textbf{Covered} sets of node $n$'s predecessors  \textit{plus} new DUAs covered at $n$ given by  formula [(\textbf{IN}($n$) - \textbf{CurSleepy}) $\bigcap$ \textbf{PotCovered}($n$)]. If a DUA $D \in$ \textbf{Covered}($n$), it will belong to \textbf{OUT}($n$) (line 12).

One possibility of not including $D$ in \textbf{Covered}($n$) is if $D$ $\not \in$ \textbf{Covered}($p$)  to  at least one of the \textbf{Covered}($p$)  sets. If so, the intersection of \textbf{Covered}($p$) sets  will remove $D$. As result, $D$ might be removed of \textbf{OUT}($n$) because it  is not covered in  a path $\pi = (s,...,p,n)$.  Another possibilities for not including $D$ is when $D \not \in$ \textbf{PotCovered}($n$) or $D \in$ \textbf{CurSleepy} at $n$.

These possibilities deals with the coverage of $D$ at node $n$; they consider that $D \in$ \textbf{IN}($n$). If  $D \in$ \textbf{CurSleepy}, then it is an edge DUA   that is not covered at $n$  because its edge ($u'$,$u$) is such that  $u = n$, but there are more than one path from $u'$ to $n$. This situation blocks the coverage of $D$ at $n$ since is not known from which path $n$ might have been reached.  Thus,  if $D \in$ [\textbf{IN}($n$) $\cap$ \textbf{CurSleepy}], it will not be included in \textbf{Covered}($n$) at line 12.

On the other hand, if  $D \not \in$ \textbf{PotCovered}($n$),  it is not covered at node $n$ because  $n$ is not the  use node of $D$. As result,    $D$ is not included in \textbf{Covered}($n$). If $D \not \in$ \textbf{Covered}($n$), it might not be included in \textbf{OUT}($n$) at line 12. This is so because $D$  is not covered in a path $\pi = (s,...,n)$

Therefore, a DUA $D$ might be removed of \textbf{OUT}($n$)  when there exists  a  path  $\pi = (s,...,n)$ in which $D$ is not covered, which proves  Case 1(a).

\textit{Case 1(b)}. This case describes situations in which a DUA is removed of \textbf{OUT}($n$) due to the redefinition of its variable.

Let us suppose that there exists a path $\pi = (s,...,k,...,p,n)$ such that a DUA $D \in$ \textbf{OUT}($k$) and $D \not \in$ \textbf{Covered}($k$). In other words, $D \in$ \textbf{OUT}($k$)  not because it has been previously covered in all paths $\pi'$= $(s,...,k)$, but because it is available after $k$.

$D$ might be removed from \textbf{OUT}($n$) either in the calculation of \textbf{IN}($n$) at line 9 or in the calculation of \textbf{OUT}($n$) at line 12. At line 9, $D$ will be removed because it does not belong to at least one of the \textbf{OUT} sets of $n$'s predecessors and will not make it to \textbf{IN}($n$). That means it has became unavailable in one of the its predecessors. Therefore, there is a redefinition of $D$'s variable  $X$ in one of the paths ($k,...,p$) where $p$ is a predecessor of $n$.

At line 12, $D$ will be removed of \textbf{OUT}($n$) if $D \in$ \textbf{Disabled}($n$); that is, variable $X$ is redefined at node $n$. Hence, the operation [\textbf{IN}($n$) - \textbf{Disabled}($n$)] will remove $D$ from  \textbf{OUT}($n$), which means that there is a redefinition of variable $X$ in the path ($k,...,n$).

Thus, a DUA $D \in$ \textbf{OUT}($k$) might be removed of \textbf{OUT}($n$) if there exists a redefinition of its variable $X$ in a path ($k,...,n$). This proves Case 1(b).

\textit{Case 2}.

The \textbf{IN}($n$) sets are calculated at line 9 by the intersection of  \textbf{OUT}($p$), being $p$ a predecessor of $n$. A DUA $D$ will be only part of  \textbf{IN}($n$), if it belongs to all \textbf{OUT}($p$) sets according to  Cases 2(a) or 2(b) above.  As a result, we focus the proof below on the \textbf{OUT}($n$) sets.

\textit{Case 2(a)}. This case deals with the condition a covered DUA remains in the \textbf{OUT} sets.

Let us suppose that a DUA $D \in$ \textbf{OUT}($n$) after executing line 12. One condition for that to happen is $D \in$ \textbf{Covered}($n$). $D$ will be part of \textbf{Covered}($n$) if it either is part of all \textbf{Covered} sets of $n$'s predecessors or $D$ is covered at $n$. The first possibility  implies that  $D$ is covered in all paths ($s,..., p,n$). The second possibility implies that $D \in$ \textbf{IN}($n$); that is, it is available in all paths ($s,..., p$), and also $D \not \in$ \textbf{CurSleepy}, $D \not \in$ \textbf{Disabled}($n$), and $D \in$ \textbf{PotCovered}($n$). These are the requirements for covering $D$ at node $n$. Hence, $D$ is covered in all paths ($s,...,,n$). This proves Case 2(a).

\textit{Case 2(b)}. This case describes  when a not covered DUA remains in the OUT sets.

Let us suppose that a $D \in$ \textbf{OUT}($n$) after executing line 12 but  $D \not \in$ \textbf{Covered}($n$). At line 12,  $D$ will not show up in \textbf{OUT}($n$) due to \textbf{Covered}($n$).  Any DUA $D \in$ \textbf{Born}($n$) is always included in \textbf{OUT}($n$) at line 12 because it becomes available exactly at $n$; hence, $D$ is available in all paths ($s,...,n$). Other possibility of inclusion in \textbf{OUT}($n$) is if $D \in$ \textbf{IN}($n$) and $D \not \in$ \textbf{Disabled}($n$). If so, $D$ is available in all paths  ($s,...,p$), because it belongs to \textbf{IN}($n$), and in  all paths  ($s,...,n$), because $D$ is not disabled at $n$. As a result, $D$ is available in all paths ($s,...,n$), which proves Case 2(b).

However, the result of Algorithm~\ref{alg} are the sets \textbf{Covered}($n$), not the sets \textbf{IN}($n$) and \textbf{OUT}($n$). Cases 1 and 2 shows that sets \textbf{IN}($n$) and \textbf{OUT}($n$) achieved the Meet-Over-All-Path (MOP) solution; that is, for all paths from the start node $s$ to $n$. However, having \textbf{OUT}($n$) achieved its final values does not guarantee that \textbf{Covered}($n$) has too because it is one step behind the \textbf{OUT}($n$). 

Lines 13-16 update \textbf{Covered}($n$) sets with the final values of the OUT($p$) where $p$ is a predecessor of $n$. After this step, a DUA $D \in$ \textbf{Covered}($n$) if it is covered in all paths ($s,...,n$); that is, all paths that reaches $n$. Therefore,  the Subsumption Algorithm  finds the local DUA-node subsumption.

\textbf{Completeness}.

There are two possibilities for a DUA $D$: It belongs to \textbf{OUT}($n$) or it does not. These two possibilities are dealt with  by Algorithm~\ref{alg} as discussed in Cases 1 and 2. Thus, it is complete.

\end{proof}

\end{document}